\documentclass[]{fairmeta}
\usepackage{float}

\usepackage{booktabs}
\usepackage{array}
\usepackage{multirow}
\usepackage[normalem]{ulem}
\usepackage{color}
\definecolor{lightgray}{RGB}{215,215,215}
\definecolor{myred}{RGB}{210,109,91}
\usepackage{colortbl}  %彩色表格需要加载的宏包
\useunder{\uline}{\ul}{}
\usepackage{algorithm}
\usepackage{algorithmicx}
\usepackage[noend]{algpseudocode}

\usepackage{amssymb}
\usepackage{amsmath}
\usepackage{enumitem}
\usepackage{tabularx}
\usepackage[utf8]{inputenc}
\usepackage{amsthm}
\newcommand{\ie}{\emph{i.e., }}
\newcommand{\eg}{\emph{e.g., }}

\newlength\myindent
\newtheorem{theorem}{Theorem}[section]

\newtheorem{corollary}{Corollary}[section]

\floatname{algorithm}{Algorithm}

\newcommand{\method}{\textsc{LION}}

\title{Self-Evolving Memory for Generative Recommendation}

\author[]{Xinyu Lin$^{1\dagger}$}
\author[]{Zhuosong Jiang$^{1\dagger}$}
\author[]{Zixiao Suo$^1$}
\author[]{Siqin Wang$^1$}
\author[]{Hanqing Zeng$^2$}
\author[]{Hanchao Yu$^2$}
\author[]{Yinglong Xia$^2$}
\author[]{Jiang Zhang$^2$}
\author[]{Aashu Singh$^2$}
\author[]{Fei Liu$^2$}
\author[]{Wenjie Wang$^1$}
\author[]{Fuli Feng$^1$}
\author[]{Yang Song$^2$}
\author[]{Qifan Wang$^2$}
\author[]{Tat-Seng Chua$^1$}

\affiliation[]{$^1$National University of Singapore $^2$Meta AI}

\contribution[]{$^\dagger$Equal contribution}

\abstract{Generative recommendation has emerged as a promising end-to-end paradigm for personalized recommendation. 
In real-world recommendation scenarios, however, user preferences continuously evolve over time, making self-evolving an essential capability for generative recommender systems. Existing evolving strategies, such as continual retraining and distillation-based adaptation, directly update the shared model parameters using streaming interactions. Nevertheless, we find that directly applying such strategies to generative recommendation introduces a critical issue, termed \emph{evolution conflict}. Specifically, heterogeneous preference shifts from different users are optimized within a fully shared autoregressive parameter space, causing dominant behavioral patterns to progressively dominate the model evolution process while underrepresented patterns become increasingly overlooked.
To address this issue, we propose a self-evolving memory paradigm for generative recommendation, aiming to enable effective evolution across heterogeneous behavioral patterns. We further identify three key principles for effective self-evolving recommendation systems, including isolated memorization, reinforced evolution, and scalable application. Guided by these principles, we develop \method, a simple yet effective framework centered on a sparse Key-Value memory layer. Specifically, \method~introduces sparse memory activation to isolate the evolution of different behavioral patterns, while a consolidation loss is designed to reinforce the learning of underrepresented preference dynamics during continual adaptation. 
Extensive experiments on diverse real-world datasets demonstrate the effectiveness of \method~under various continual evolution settings (\eg per-period evaluation, user/item group evaluation, and evolution convergence analysis). }

\date{\today}
\correspondence{Xinyu Lin at \email{xylin1028@gmail.com}, Qifan Wang at \email{wqfcr@meta.com}}
\metadata[Code]{\url{https://github.com/JazyJiang/Self-Evolving-Memory-for-Generative-Recommendation}}

\begin{document}
\maketitle

\section{Introduction}\label{sec:intro}

\begin{figure}[t]
\vspace{-0.2cm}
\setlength{\abovecaptionskip}{0.0cm}
\setlength{\belowcaptionskip}{-0cm}
\centering
\includegraphics[width=0.5\linewidth]{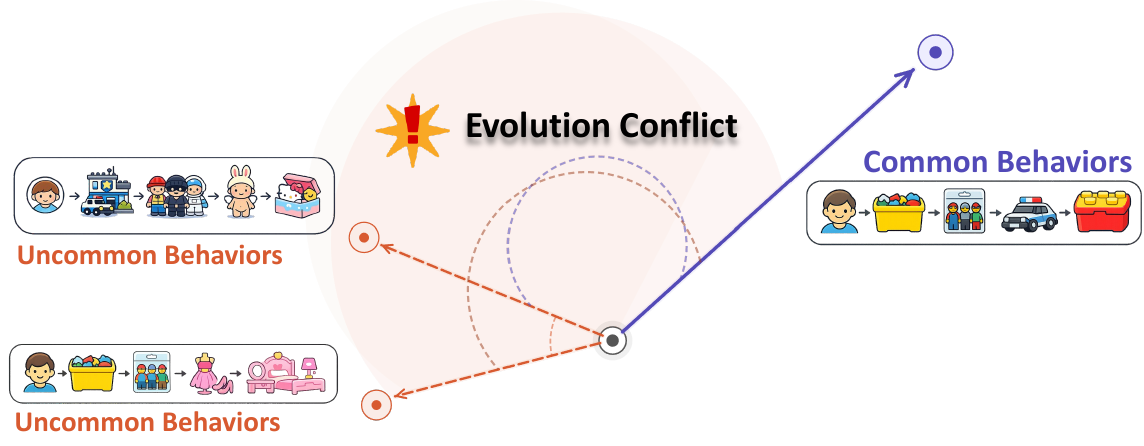}
\caption{Illustration of ``evolution conflict'' in generative recommender models, where the different behavior patterns are evolved in the shared evolving space, \ie Transformer backbone. }
\label{fig:intro-evolution-conflict}
\end{figure}

Generative recommendation has emerged as a compelling paradigm for personalized recommendation, where models directly generate item identifiers in an autoregressive manner to capture user preferences end-to-end~\cite{geng2022recommendation,rajput2023recommender,wang2024learnable}. Unlike traditional approaches that score a predefined candidate set, generative recommenders unify user history encoding and item generation within a single shared parameter space (\eg a Transformer backbone), achieving strong performance across diverse recommendation scenarios~\cite{rajput2023recommender,zhai2024actions,deng2025onerec}. 
In practice, however, user preferences are not static. They continuously evolve as users discover new interests and shift their consumption patterns over time. 
This makes self-evolving an essential capability for recommender systems: the model must perpetually adapt to streaming interaction data to align with each user's current context and intention.

Existing approaches to self-evolving recommendation can be broadly grouped into two categories:
\begin{itemize}[leftmargin=*]
    \item \textbf{Continual retraining} methods directly fine-tune the model on newly arriving interaction data to track distributional shifts~\cite{zhang2020retrain,lee2023important}. While conceptually straightforward, these approaches risk overwriting previously learned knowledge as the model aggressively adapts to recent data.
    \item \textbf{Regularization-based} methods measure the degree of preference shift via changes in user representations or ID embeddings~\cite{wang2023sail,yoo2025embracing}. Users with larger shifts are assigned greater update weights, while stable users are updated more conservatively, balancing plasticity and stability. Some methods also leverage replay and distillation techniques as regularization to consolidate the knowledge on existing behavior patterns~\cite{wang2023sail,lee2024continual,robins1995catastrophic}. 
\end{itemize}

Despite the strong effectiveness of existing self-evolving strategies, directly applying them to generative recommendation leads to a critical challenge, namely \emph{\textbf{evolution conflict}}. 
In practical recommendation scenarios, user interactions naturally contain both dominant patterns that frequently appear across active users and underrepresented patterns that occur less often or reflect personalized interests. 
Consequently, when the model continuously adapts to heterogeneous preference shifts from different users, their optimization directions may inherently conflict with each other (Figure~\ref{fig:intro-evolution-conflict}). As continual evolution progresses, dominant behavioral patterns gradually dominate the model evolution process, while underrepresented patterns receive insufficient optimization influence, ultimately hurting recommendation performance (Figure~\ref{fig:intro_group_performance}). Detailed analysis is provided in Section~\ref{sec:analysis}.

\begin{figure}[t]
\vspace{-0.2cm}
\setlength{\abovecaptionskip}{0.0cm}
\setlength{\belowcaptionskip}{-0cm}
\centering
\includegraphics[width=0.5\linewidth]{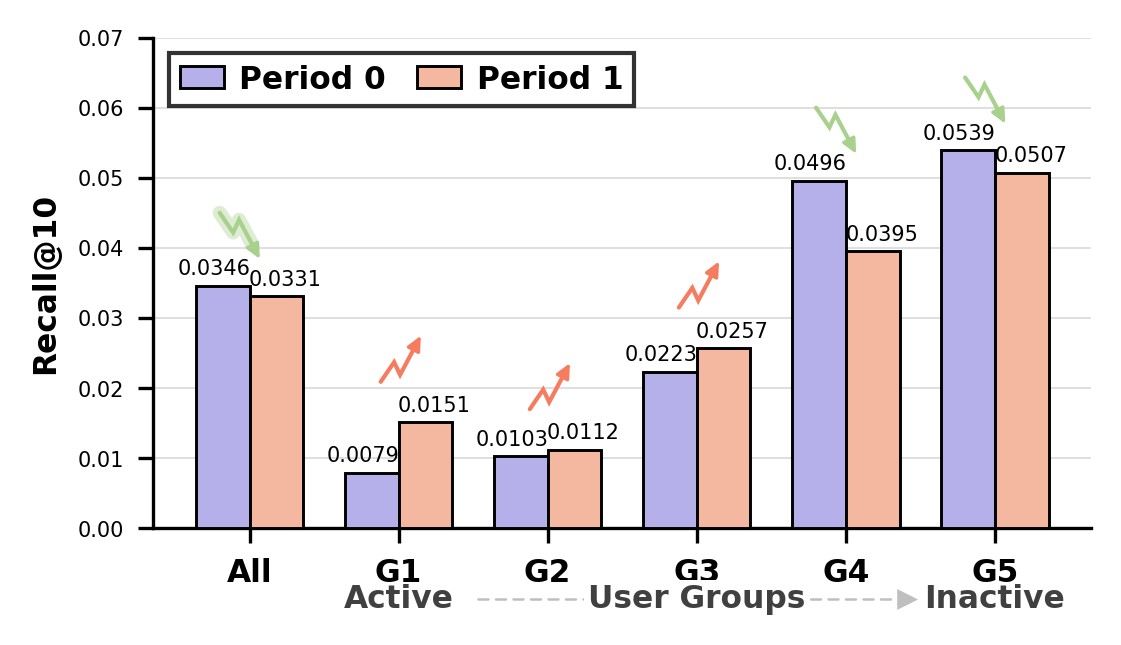}
\caption{Performance of TIGER~\cite{rajput2023recommender} across user groups over two periods on the Toys dataset. During evolution from Period 0 to Period 1, TIGER favors active users with common patterns, improving their performance, while overlooking uncommon patterns from inactive users, undermining their performance and causing an overall drop.}
\label{fig:intro_group_performance}
\end{figure}

To address this issue, the recommender systems should be explicitly organized through a self-evolving memory paradigm, where different behavioral patterns can evolve in a more structured and controllable manner. 
To achieve this, we posit that the self-evolving paradigm should satisfy three key requirements from perspectives of architecture, evolution, and real-world deployment, respectively. 
\begin{itemize}[leftmargin=*]
    \item \textbf{\textit{Isolated memorization}}. At the architectural level, the model should differentiate heterogeneous behavioral patterns during continual evolution. As such, we can prevent dominant patterns from overwhelming underrepresented preference dynamics within a shared evolution space. 
    \item \textbf{\textit{Reinforced evolution}}. During optimization, underrepresented patterns are more difficult to learn during continual adaptation since they appear less frequently in streaming interactions. As such, the evolving process should provide stronger supervision signals to ensure these patterns can still be effectively captured throughout evolution.
    \item \textbf{\textit{Scalable application}}. To deploy the evolving paradigm practically, the evolving mechanism should remain efficient and scalable as user populations and streaming interactions continuously grow. In real-world recommendation scenarios, the number of behavioral patterns can become extremely large. As such, the memory mechanism for continual evolution should scale efficiently without introducing prohibitive parameter or computation overhead. 
\end{itemize}

Guided by these principles, we propose \method, a self-evolving generative recommendation framework centered on a personalized sparse memory layer, consisting of Key-Value pairs for evolution. 
Specifically, 
to achieve isolated memorization, we introduce a query-driven sparse memory activation mechanism, where each user only activates a subset of memory parameters during continual evolution. 
To further reinforce underrepresented patterns during continual evolution, we introduce a consolidation loss that encourages the memory layer to capture preference-aware representations from streaming interactions. 
Moreover, the proposed sparse KV-driven memory mechanism naturally supports large-scale real-world deployment due to its scalable representation capacity. 
We further provide theoretical analysis of the memory expressiveness in Section~\ref{sec:analysis}. 
Extensive experiments on three real-world datasets demonstrate the effectiveness of \method~in improving performance of both dominant and underrepresented behavioral patterns, while maintaining efficient scalability for large-scale continual adaptation. 
The source codes can be found at~\url{https://github.com/JazyJiang/Self-Evolving-Memory-for-Generative-Recommendation}.

Our main contributions are summarized as follows:
\begin{itemize}[leftmargin=*]
    \item We identify a critical yet previously overlooked issue in self-evolving generative recommendation, termed \emph{evolution conflict}, where continual adaptation within a fully shared autoregressive parameter space causes dominant behavioral patterns to overwhelm underrepresented preference dynamics.

    \item We propose a self-evolving memory paradigm for generative recommendation and formulate three key design principles, including isolated memorization, reinforced evolution, and scalable application, to support robust continual adaptation for heterogeneous behavioral patterns.

    \item We develop \method, a simple yet effective self-evolving framework based on a KV-driven sparse memory layer. The proposed framework enables isolated memorization through sparse memory activation, reinforces underrepresented patterns via consolidation supervision, and supports scalable application under large-scale recommendation scenarios.

    \item Extensive experiments on multiple real-world datasets demonstrate the effectiveness of \method\ across various continual evolution settings.
\end{itemize}

\section{Task Formulation}\label{sec:formulation}

\noindent$\bullet\quad$\textbf{Generative Recommendation}. 
In generative recommendation, each item $i \in \mathcal{I}$ is represented by an item identifier (\ie a token sequence): 
$\mathbf{c}_i = [v_1, v_2, \dots, v_L]$, where $v_l \in \mathcal{V}$, $L$ is the identifier length, and $\mathcal{V}$ is the token vocabulary. 
Let $\mathcal{U}$ and $\mathcal{I}$ denote the user set and item 
set, respectively. 
Given user $u \in \mathcal{U}$'s historical interaction 
sequence $\mathcal{S}_u = [i_1, i_2, \dots, i_N]$, the generative 
recommender $M_{\theta}(\cdot)$ predicts the identifier of the next item 
$i_{N+1}$ token by token, conditioned on the interaction history and all 
previously generated tokens:
\begin{equation}\small
  P(\mathbf{c}_{i_{N+1}} \mid \mathcal{S}_u)
  = \prod_{l=1}^{L}
    P\!\left(v_l \mid \mathbf{c}_{i_{N+1},<l},\, \mathcal{S}_u;\, \theta\right),
\end{equation}
where $\mathbf{c}_{i_{N+1},<l} = [v_1, \dots, v_{l-1}]$ denotes the 
partially generated identifier prefix, and $\theta$ denotes the model 
parameters.

The model is trained by maximizing the log-likelihood of all next-item 
identifiers over the shared parameters $\theta$:
\begin{equation}
  \max_{\theta} \sum_{u \in \mathcal{U}} \sum_{t=1}^{N}
    \sum_{l=1}^{L}
    \log P\!\left(v_l^{(t+1)} \mid \mathbf{c}_{i_{t+1},<l},\,
    \mathcal{S}_u^{\le t};\, \theta\right),
\end{equation}
where $v_l^{(t+1)}$ is the $l$-th token of item $i_{t+1}$'s identifier, 
and $\mathcal{S}_u^{\le t} = [i_1, \dots, i_t]$ is the interaction prefix 
up to time $t$. It is noted that generative recommendation paradigm models all user preference transitions through the shared autoregressive parameter space $\theta$.

\noindent$\bullet\quad$\textbf{Self-Evolving Recommendation}. 
In real-world recommendation scenarios, user interactions continuously evolve over time.
Following prior incremental recommendation settings~\cite{wang2023sail,yoo2025embracing,shi2024preliminary}, we model the streaming interaction data as a sequence of evolving periods:
\begin{equation}
\{\mathcal{P}_1, \mathcal{P}_2, \dots, \mathcal{P}_\tau, \dots\},
\end{equation}
where $\mathcal{P}_\tau$ denotes the interaction data observed during the $\tau$-th period.
At each evolving period $\mathcal{P}_\tau$, the recommender continually updates its parameters based on the streaming interactions:
\begin{equation}
\theta_\tau
=
\arg\min_{\theta}
\mathcal{L}_{rec}(\mathcal{P}_r; \theta_{\tau-1}),
\end{equation}
where $\theta_{\tau-1}$ denotes the model parameters inherited from the previous evolving period $\mathcal{P}_{\tau-1}$, and $\mathcal{P}_r$ denotes the retraining data used for continual adaptation.
Existing self-evolving recommendation strategies mainly adopt two retraining paradigms:
1) \textit{continual fine-tuning}, where $\mathcal{P}_r = \mathcal{P}_\tau$ only contains the latest streaming interactions;
and
2) \textit{distillation-based fine-tuning}, where $\mathcal{P}_r = \mathcal{P}_{\le \tau}$ contains all historical interactions up to the current evolving period.
After adaptation at period $\mathcal{P}_\tau$, the updated model $\theta_\tau$ is deployed to serve recommendation for the next period $\mathcal{P}_{\tau+1}$.

\section{Identification of Evolution Conflict}\label{sec:analysis}

As shown in Figure~\ref{fig:intro_group_performance}, we empirically observe that an evolved model might degrade the performance of some subset of users (\eg G3-G5). 
We next analyze why it inevitably introduces evolution conflict 
among heterogeneous user behavioral patterns.

\noindent$\bullet\quad$\textbf{Optimization Analysis of Evolution Conflict}. 
The recommendation objective can be decomposed into:  
\begin{equation}
\small
\mathcal{L}_{rec}
=
\mathcal{L}_{c}
+
\mathcal{L}_{unc},
\end{equation} 
where $\mathcal{L}_{c}$ and $\mathcal{L}_{unc}$ denote the recommendation objectives 
for common and uncommon behavior patterns, respectively.
The corresponding optimization gradients are
$g_c = \nabla_{\theta}\mathcal{L}_{c}$ and $g_{unc} = \nabla_{\theta}\mathcal{L}_{unc}$. 
The optimization directions between common and uncommon behaviors can then inherently conflict within the shared parameter space: $g_c^\top g_{unc} < 0$. 
Furthermore, common patterns appear significantly more frequently in streaming interactions, leading to substantially asymmetric gradient magnitudes (\ie $\|g_c\| \gg \|g_{unc}\|$). 
The continual parameter update at each evolving period $\mathcal{P}_\tau$ thus becomes: 
\begin{equation}
\small
\theta_{\tau+1}
=
\theta_\tau
-
\eta(g_c + g_{unc}),
\end{equation} 
where the update is progressively dominated by $g_c$. 
Combined with the gradient conflict, this leads to a doubly suboptimal outcome: 
users with common behavior patterns receive suboptimal updates as their gradients are partially canceled by $g_{unc}$, 
while users with uncommon patterns suffer progressively worsening performance as their weaker gradients 
are consistently overwhelmed by $g_c$. 

\noindent$\bullet\quad$\textbf{Design Principles for Self-Evolving Memory}. 
The above analysis motivates a structured self-evolving memory mechanism 
that explicitly organizes the evolving process across heterogeneous behavioral patterns. 
We posit three key design principles as follows: 
1) {\textit{Isolated memorization}}: different behavioral patterns should
evolve through differentiated optimization pathways;
2) {\textit{Reinforced evolution}}: underrepresented patterns should
receive stronger supervision signals to ensure effective preservation throughout evolution;
3) {\textit{Scalable application}}: the evolving mechanism should remain
scalable without introducing prohibitive parameter or computation overhead.

\section{\method}\label{sec:method}

Based on the analysis in Section~\ref{sec:analysis}, we propose \method, a self-evolving generative recommendation framework (Figure~\ref{fig:LION_method}(a)) centered on a sparse Key-Value memory layer, that achieves isolated memorization (Section~\ref{subsec:sparse_memory_layer}) and reinforced evolution (Section~\ref{subsec:consolidation_loss}). We further provide theoretical analysis of \method~on gradient reduction, fast convergence, and scalability (Section~\ref{sec:theory}).

\begin{figure}[t]
\vspace{-0.2cm}
\setlength{\abovecaptionskip}{0.0cm}
\setlength{\belowcaptionskip}{-0cm}
\centering
\includegraphics[width=0.8\linewidth]{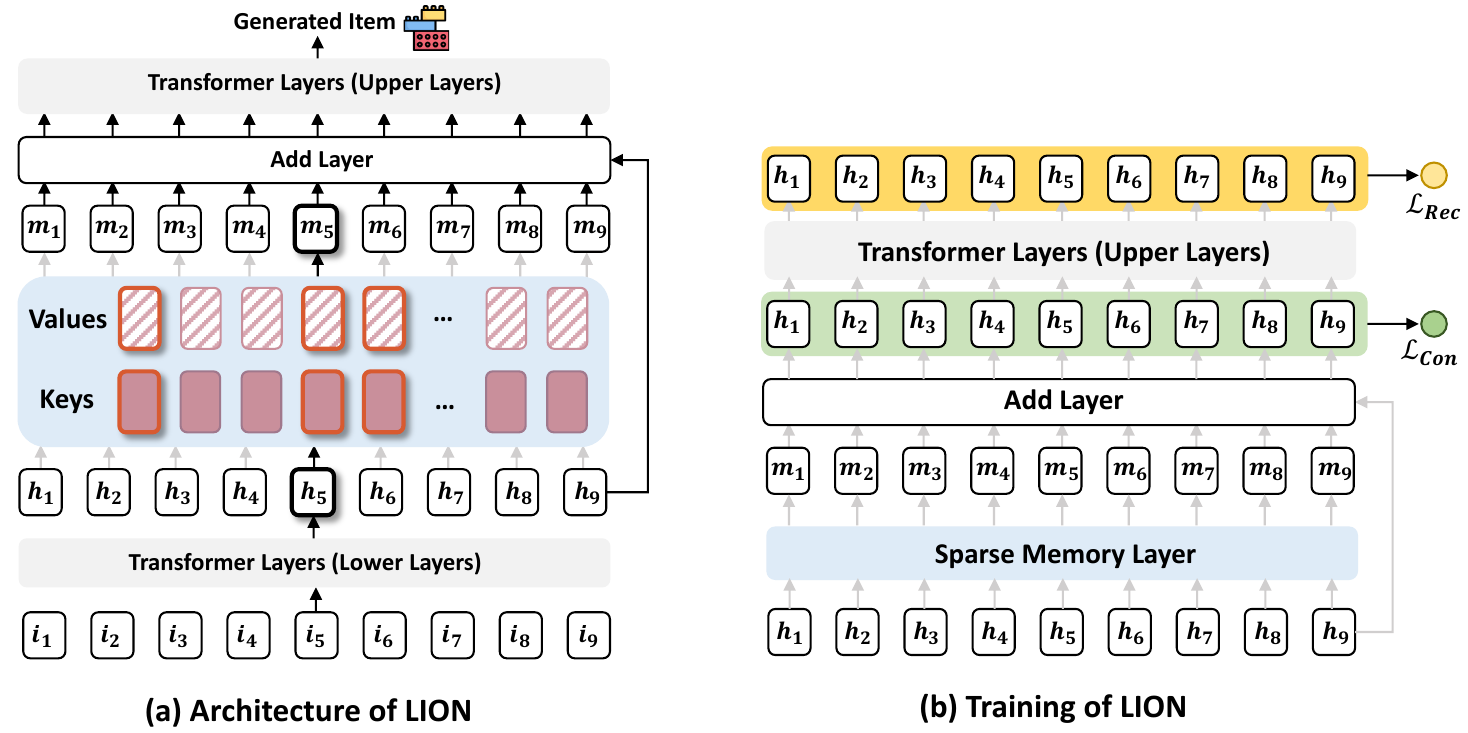}
\caption{Figure (a) illustrates the sparse memory layer, where a subset of Key-Value pairs are activated by the user query, \ie output hidden states of the lower layers. Figure (b) shows the two training loss of \method, \ie consolidation loss and recommendation loss.}
\label{fig:LION_method}
\end{figure}

\subsection{Self-Evolving Memory Framework} 
\subsubsection{\textbf{Sparse Memory Evolution}}\label{subsec:sparse_memory_layer}

To achieve isolated memorization for heterogeneous behavioral patterns, the evolving mechanism should differentiate different user preference dynamics during continual adaptation.

\vspace{2pt}
\noindent$\bullet\quad${\textbf{Motivation for Sparse Memory}}. 
A straightforward solution is to introduce explicit user-specific parameters (\eg user embeddings~\cite{he2017neural} or personalized adapters~\cite{hu2022lora,zhang2024federated}) to separately model different users. 
However, such dense personalization strategies introduce substantial parameter overhead and become difficult to scale under large-scale recommendation scenarios with continuously growing user populations and behavioral patterns. 
Alternatively, we propose a sparse memory evolution mechanism, where different behavioral patterns are dynamically routed to different sparse memory parameters during continual adaptation.

\vspace{2pt}
\noindent$\bullet\quad${\textbf{Sparse Key-Value Memory Layer}}. 
We introduce a sparse Key-Value memory layer into the generative recommender backbone. 
Formally, the memory layer consists of $N$ trainable Key-Value pairs:
\begin{equation}
\small
\mathcal{M}
=
\{(k_i, v_i)\}_{i=1}^{N},
\end{equation}
where $k_i \in \mathbb{R}^{d}$ denotes the memory key and $v_i \in \mathbb{R}^{d}$ denotes the corresponding memory value.

\vspace{2pt}
\noindent$\bullet\quad${\textbf{Sparse Activation}}. 
Given the hidden representation $h$ of a Transformer layer, the sparse memory layer first computes the similarity between the user representation and all memory keys: $s_i=h_u^\top k_i$. 
Based on the similarity scores, the memory layer selects the Top-$K$ activated memory slots: 
\begin{equation}
\small
\mathcal{A}
=
\text{TopK}(s_1, s_2, \dots, s_N),
\end{equation} 
where $\mathcal{A}$ denotes the activated memory index set. 
The corresponding memory values are then aggregated to construct the sparse memory representation: 
\begin{equation}
\small
m_u
=
\sum_{i \in \mathcal{A}}
\alpha_i v_i,
\end{equation} 
where $\alpha_i = \frac{s_i - \min_{j \in \mathcal{A}} s_j}{\max_{j \in \mathcal{A}} s_j - \min_{j \in \mathcal{A}} s_j}$ denotes the normalized weight of the activated memory slot. 
Finally, the aggregated memory representation is integrated back into the Transformer hidden states for the subsequent forward process: 
\begin{equation}
\small
\tilde{h}_u
=
h_u + m_u.
\end{equation}

\vspace{2pt}
\noindent$\bullet\quad${\textbf{Full-Sequence Memory Query}}.
The sparse memory layer is designed to capture each user's behavior pattern, which is best reflected by the complete interaction history
$\mathcal{S}_u^{\text{full}}$.
However, given that the long-history utilization is limited under the generative recommendation paradigm, existing generative recommenders will only use recent interaction
subsequence $\mathcal{S}_u^{\text{rec}} = [i_{N-h_{\text{rec}}+1}, \dots, i_N]$, where $h_{\text{rec}}$ controls the recency window.
To bridge this gap, we compute the memory query $h_u$ from
$\mathcal{S}_u^{\text{full}}$, while the Transformer layers before and after
the memory layer operate on $\mathcal{S}_u^{\text{rec}}$.
This asymmetric design ensures that memory activation is grounded in the
user's full behavioral context, while keeping backbone computation efficient.

% training

\subsubsection{\textbf{Reinforced Evolution}}\label{subsec:consolidation_loss}

To support continual adaptation under streaming interactions, we optimize the proposed framework using two complementary objectives, including a recommendation loss and a consolidation loss (as shown in Figure~\ref{fig:LION_method}(b)). 
The recommendation loss supervises the final autoregressive recommendation outputs, while the consolidation loss directly reinforces the preference modeling capability of the sparse memory layer during continual evolution.

\noindent$\bullet\quad${\textbf{Recommendation Loss}}. 
To achieve effective recommendations, we adopt the loss in generative modeling: 
\begin{equation}
\small
\mathcal{L}_{rec}
=
-
\sum_{u \in \mathcal{U}}
\sum_{t=1}^{L}
\log
P(i_{t+1} \mid i_{\le t}; \theta). 
\end{equation}

\noindent$\bullet\quad${\textbf{Consolidation Loss}}. 
To reinforce the memory layer's preference modeling capability, we impose an auxiliary supervision directly on the memory-enhanced representation $\tilde{h}_u$:
\begin{equation}
\small
\mathcal{L}_{con}
=
-
\sum_{u \in \mathcal{U}}
\sum_{t=1}^{L}
\log
P(i_{t+1} \mid \tilde{h}_u; \theta),
\end{equation}
which ensures underrepresented behavioral patterns receive sufficient supervision signals throughout continual evolution.

\noindent$\bullet\quad${\textbf{Overall Training Objective}}. 
The final optimization objective combines the recommendation loss and consolidation loss:

\begin{equation}
\mathcal{L}
=
\mathcal{L}_{rec}
+
\lambda \mathcal{L}_{con},
\end{equation}

where $\lambda$ controls the strength of the consolidation supervision.

\subsubsection{\textbf{Instantiation}}\label{sec:instantiation}
We instantiate the proposed framework under the streaming setting in Section~\ref{sec:formulation}.
The sparse Key-Value memory layer is inserted into the intermediate Transformer layers of the generative recommendation backbone, enabling sparse activation of differentiated memory slots for heterogeneous behavioral patterns.
At each evolving period $\mathcal{P}_\tau$, upon arrival of new streaming interactions $\mathcal{P}_\tau$, we continually fine-tune both the backbone and memory parameters via the objectives defined in Section~\ref{sec:method}.
The updated model $\theta_\tau$ is then deployed for autoregressive next-item generation to serve the subsequent period $\mathcal{P}_{\tau+1}$.

\subsection{Theoretical Analysis}
\label{sec:theory}

We provide theoretical justification for the three design principles underlying \method:
gradient conflict reduction (Section~\ref{sec:theory_conflict}),
faster per-pattern convergence (Section~\ref{sec:theory_convergence}),
and parameter efficiency (Section~\ref{sec:theory_scale}).

% -------------------------------------------------------
\subsubsection{\textbf{Gradient Conflict Reduction}}
\label{sec:theory_conflict}
% -------------------------------------------------------

% \paragraph{Setup.}
At evolving period $\mathcal{P}_\tau$, let $\mathcal{A}_c \subseteq [N]$ and
$\mathcal{A}_{unc} \subseteq [N]$ denote the activated memory index sets for common and
uncommon behavioral patterns, where $|\mathcal{A}_c| = |\mathcal{A}_{unc}| = K$.
Define the overlap set $\mathcal{O} = \mathcal{A}_c \cap \mathcal{A}_{unc}$ and normalized
overlap ratio $\rho = |\mathcal{O}|/K \in [0,1]$.
For memory slot $i$, let $g_i^c = \nabla_{v_i}\mathcal{L}_c$ and
$g_i^{unc} = \nabla_{v_i}\mathcal{L}_{unc}$ denote the gradients contributed by common
and uncommon patterns, respectively.
We quantify gradient conflict by the total negative inner product across memory parameters:
\begin{equation}
\small
  \mathrm{Conflict}(\mathcal{A}_c, \mathcal{A}_{unc})
  = \sum_{i=1}^{N} \max\!\bigl(0,\ {-\langle g_i^c,\, g_i^{unc} \rangle}\bigr).
\end{equation}
{We compare against a baseline \emph{without} sparse activation, where every one of the $N$
slots (\ie the full shared parameter space) is jointly updated by both $g_i^c$ and
$g_i^{unc}$, so $\mathrm{Conflict}_{\textup{Baseline}} = \sum_{i=1}^{N} \max(0, -\langle g_i^c, g_i^{unc}\rangle)$.}
% this is the dense, non-isolated regime, not a memory-free model.

\begin{theorem}[Gradient Conflict Reduction]
\label{thm:conflict}
Under \method's sparse activation, gradient conflict is strictly confined to the overlap
region $\mathcal{O}$:
\begin{equation}
\small
  \mathrm{Conflict}_{\textup{\method}}
  = \sum_{i \in \mathcal{O}} \max\!\bigl(0,\,{-\langle g_i^c,\, g_i^{unc} \rangle}\bigr),
  \qquad
  \frac{\mathrm{Conflict}_{\textup{\method}}}{\mathrm{Conflict}_{\textup{Baseline}}}
  \leq \rho = \frac{|\mathcal{O}|}{K}.
\end{equation}
\end{theorem}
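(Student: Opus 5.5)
The plan has two parts, matching the two claims of Theorem~\ref{thm:conflict}: first an exact support argument for the identity, then a comparison with the dense baseline for the ratio bound.

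\textbf{Support argument.} The key observation is structural. Under sparse activation, a user's memory output $m_u=\sum_{i\in\mathcal{A}}\alpha_i v_i$ depends on $v_i$ only when $i\in\mathcal{A}$. For the common patterns, $\mathcal{L}_c$ reaches the memory values only through $m_u$ with $\mathcal{A}=\mathcal{A}_c$. The chain rule then gives $g_i^c=\nabla_{v_i}\mathcal{L}_c=0$ for every $i\notin\mathcal{A}_c$. Symmetrically, $g_i^{unc}=0$ for every $i\notin\mathcal{A}_{unc}$. The Top-$K$ selection is piecewise constant, so I treat $\mathcal{A}$ as fixed within the period, or equivalently use the a.e.\ gradient; either way no gradient flows through the selection to the values.

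Hence $\langle g_i^c,g_i^{unc}\rangle=0$ whenever $i\notin\mathcal{A}_c\cap\mathcal{A}_{unc}=\mathcal{O}$. Each such term contributes $\max(0,0)=0$, and the sum over $[N]$ collapses to the sum over $\mathcal{O}$. This proves the displayed identity.

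\textbf{Ratio bound.} This inequality cannot hold for arbitrary gradients, so the main step, and the real obstacle, is choosing a comparison assumption that makes it true. I will state the assumption explicitly as part of the proof.

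First, I assume that on the overlap slots the per-slot conflict under \method\ is no larger than in the baseline, i.e., $c_i^{\text{\method}}\le c_i$ for $i\in\mathcal{O}$. Here $c_i=\max(0,-\langle g_i^c,g_i^{unc}\rangle)$ is computed in the dense regime. This is natural: on $\mathcal{O}$ both models route the same pattern signals into $v_i$, and sparse routing only removes contributions.

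Second, I assume the dense baseline spreads conflict homogeneously, so that $c_i=\bar c>0$ for all $i\in[N]$. If exact uniformity is too strong, I will use the weaker condition that the average baseline conflict per slot dominates the largest conflict on the overlap, $\max_{i\in\mathcal{O}}c_i\le\frac1N\sum_{j=1}^N c_j$.

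\textbf{Conclusion.} Under these assumptions,
\[
\mathrm{Conflict}_{\text{\method}}\le\sum_{i\in\mathcal{O}}c_i\le|\mathcal{O}|\cdot\frac1N\sum_{j=1}^N c_j=\frac{|\mathcal{O}|}{N}\,\mathrm{Conflict}_{\text{Baseline}}.
\]
Since $K\le N$, we have $|\mathcal{O}|/N\le|\mathcal{O}|/K=\rho$, which gives the claimed bound. The degenerate case $\mathrm{Conflict}_{\text{Baseline}}=0$ makes the ratio undefined, and I will exclude it explicitly.

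Finally, I will remark that the bound is attained (up to $K/N$) when conflict is concentrated evenly. I will also note that $\rho=0$, i.e., disjoint activation, yields zero memory-level conflict exactly, with no assumption needed beyond the support argument.
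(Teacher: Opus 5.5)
Your proposal is correct and follows the paper's route. The identity uses the same support argument: slots outside $\mathcal{O}$ receive at most one nonzero gradient, so their conflict terms vanish. The ratio comes from counting the $|\mathcal{O}|$ overlap slots against the baseline's slots.

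Where you differ is only in rigor. The paper's one-line counting step silently relies on the per-slot comparability and homogeneity assumptions that you state explicitly, and some such assumption is needed because the ratio cannot be proved for arbitrary dense-regime gradients. Comparing against all $N$ baseline slots, as you do, also gives the sharper bound $|\mathcal{O}|/N\le\rho$. That is more consistent with the paper's own definition of $\mathrm{Conflict}_{\textup{Baseline}}$ as a sum over all $N$ slots than its phrase ``the $K$ activated slots in the baseline.''
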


\begin{proof}
Slot $i$ receives $g_i^c$ only if $i \in \mathcal{A}_c$ and $g_i^{unc}$ only if
$i \in \mathcal{A}_{unc}$, so slots outside $\mathcal{O}$ see at most one gradient and
incur no conflict; bounding conflict to the $|\mathcal{O}|$ overlapping directions versus
the $K$ activated slots in the baseline gives the ratio.
\end{proof}

\begin{corollary}[Self-Improving Isolation]
\label{cor:isolation}
As training progresses, routing keys $\{k_i\}$ tend to specialize toward distinct
behavioral patterns, empirically driving $|\mathcal{O}| \to 0$ and $\rho \to 0$ and thus
decreasing gradient conflict. {We treat this as an observed training trend rather than a
guaranteed outcome, since specialization depends on the data distribution and optimization
dynamics; we verify this trend empirically in Section~\ref{sec:group-analysis}.}
\end{corollary}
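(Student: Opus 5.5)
The plan is to split Corollary~\ref{cor:isolation} into a \emph{rigorous conditional part} and an \emph{empirical dynamical part}, as the statement itself suggests. The conditional part says that if the overlap shrinks, conflict shrinks, and it follows directly from Theorem~\ref{thm:conflict}. The claim that training actually makes the overlap shrink is what the statement explicitly treats as an observed trend. I will not prove that claim; I will only give a heuristic mechanism for it and defer its verification to Section~\ref{sec:group-analysis}.

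\emph{Step 1: conflict is monotone in the overlap.} By Theorem~\ref{thm:conflict},
\[
\mathrm{Conflict}_{\textup{\method}}=\sum_{i\in\mathcal{O}} c_i, \qquad c_i:=\max\bigl(0,-\langle g_i^c,g_i^{unc}\rangle\bigr)\ge 0 .
\]
This is a sum of nonnegative terms indexed by $\mathcal{O}$. Hence if a training step replaces $\mathcal{O}$ by a subset $\mathcal{O}'\subseteq\mathcal{O}$, and the per-slot gradients are held fixed, the conflict cannot increase. For changes that are not nested, I will use the cruder bound
\[
\mathrm{Conflict}_{\textup{\method}}\le |\mathcal{O}|\,\max_i c_i = \rho K \max_i c_i .
\]
Provided the memory-value gradients stay bounded (e.g.\ $\|g_i^c\|,\|g_i^{unc}\|\le G$, so that $c_i\le G^2$), this gives $\mathrm{Conflict}_{\textup{\method}}\le \rho K G^2\to 0$ whenever $\rho\to0$. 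In the limiting case $\mathcal{O}=\emptyset$, the conflict is exactly zero. Combined with the ratio bound $\le\rho$ of Theorem~\ref{thm:conflict}, this establishes the ``thus decreasing gradient conflict'' clause.

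\emph{Step 2: heuristic for key specialization.} I would argue that a shared slot $i\in\mathcal{O}$ with $c_i>0$ receives opposing value updates, so $v_i$ cannot reduce both $\mathcal{L}_c$ and $\mathcal{L}_{unc}$ at once. The key gradient $\nabla_{k_i}\mathcal{L}$ flows through $s_i=h_u^\top k_i$ and the weight $\alpha_i$. It therefore moves $k_i$ toward the queries $h_u$ of the pattern for which $v_i$ helps, and away from those for which $v_i$ hurts. This lowers $s_i$ for the losing pattern until slot $i$ drops out of that pattern's Top-$K$, which removes $i$ from $\mathcal{O}$. The argument is analogous to competitive learning or $k$-means-style specialization of prototypes.

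\emph{Main obstacle.} Turning Step~2 into a theorem is the hard part, and I expect it to fail in general. Top-$K$ selection is piecewise constant, so its set membership has no gradient, and the min--max normalization of $\alpha_i$ makes the weights of the extreme slots degenerate ($\alpha=0$ and $\alpha=1$). The specialization also depends on how separable the query distributions of common and uncommon users are: if the queries $h_u$ of the two groups coincide, no key can separate them. For this reason I would state only the conditional result of Step~1 formally. The trend $\rho\to0$ would be supported by tracking $|\mathcal{O}|$ across evolving periods in the experiments, consistent with the corollary's own caveat.
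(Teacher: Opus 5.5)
Your proposal matches the paper's treatment. The paper gives no formal proof of this corollary: it reads the ``thus decreasing gradient conflict'' clause off Theorem~\ref{thm:conflict}'s confinement of conflict to $\mathcal{O}$ and leaves $\rho\to 0$ as an empirical trend, which is exactly your split into a conditional Step~1 and a heuristic Step~2. The one difference is the empirical support: Section~\ref{sec:group-analysis} measures the aggregated cross-group gradient cosine on the memory layer rather than $|\mathcal{O}|$ itself, and if memory-layer gradients live only on activated slots, the strictly positive cosine reported for \method{} actually requires some shared activated slots, so your suggestion to track the overlap directly would be the more faithful test of $\rho\to 0$.
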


% -------------------------------------------------------
\subsubsection{\textbf{Convergence Analysis}}
\label{sec:theory_convergence}
% -------------------------------------------------------

% \paragraph{Setup.}
We model the effective gradient noise variance for pattern $c$ as:
\begin{equation}\small
  \sigma_c^2
  = \underbrace{\sigma_0^2}_{\text{stochastic noise}}
  + \underbrace{\beta^2 \cdot \mathbb{E}\bigl[\|g_{\bar{c}}\|^2\bigr]}_{\text{cross-pattern interference}},
\end{equation}
where $g_{\bar{c}}$ denotes the aggregate gradient from all patterns other than $c$,
and $\beta \geq 0$ scales the interference strength.
Under \method, interference propagates only through $\mathcal{O}$:
\begin{equation}
  \sigma_{c,\textup{\method}}^2
  = \sigma_0^2 + \rho \cdot \beta^2 \cdot \mathbb{E}\bigl[\|g_{\bar{c}}\|^2\bigr],
  \qquad \rho \ll 1.
\end{equation}

\begin{theorem}[Faster Per-Pattern Convergence]
\label{thm:convergence}
Under $L$-smooth $\mathcal{L}_c$ and learning rate $\eta = \mathcal{O}(1/\sqrt{T})$,
after $T$ gradient steps:
\begin{equation}\small
  \frac{1}{T}\sum_{t=1}^{T}\mathbb{E}\!\left[\left\|\nabla\mathcal{L}_c(\theta_t)\right\|^2\right]
  \leq
  \frac{2\bigl(\mathcal{L}_c(\theta_0)-\mathcal{L}_c^*\bigr)}{\eta T}
  + \eta L\,\sigma_{c,\textup{\method}}^2.
\end{equation}
Since $\sigma_{c,\textup{\method}}^2 \leq \sigma_{c,\textup{Baseline}}^2$, \method\
reaches $\epsilon$-stationarity in fewer steps, with speedup ratio:
\begin{equation}\small
  \frac{T_{\textup{\method}}}{T_{\textup{Baseline}}}
  \leq
  \frac{\sigma_0^2 + \rho\beta^2\mathbb{E}[\|g_{\bar{c}}\|^2]}
       {\sigma_0^2 + \beta^2\mathbb{E}[\|g_{\bar{c}}\|^2]}
  \leq 1.
\end{equation}
\end{theorem}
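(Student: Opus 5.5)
The plan is to follow the standard nonconvex SGD analysis (Ghadimi--Lan style) applied to $\mathcal{L}_c$. We then plug in the two variance models defined just before the statement. The assumptions are these. The stochastic gradient $\hat g_t$ used at step $t$ satisfies $\mathbb{E}[\hat g_t\mid\theta_t]=\nabla\mathcal{L}_c(\theta_t)$. Its conditional variance is bounded by $\sigma_{c,\textup{\method}}^2$ for \method\ and by $\sigma_{c,\textup{Baseline}}^2$ for the baseline. We treat the cross-pattern interference term as zero-mean noise, with its second moment entering the variance only through the overlap $\mathcal{O}$, as postulated. Finally, $\mathcal{L}_c$ is bounded below by $\mathcal{L}_c^*$.

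First step: apply the descent lemma from $L$-smoothness, $\mathcal{L}_c(\theta_{t+1})\le \mathcal{L}_c(\theta_t)-\eta\langle\nabla\mathcal{L}_c(\theta_t),\hat g_t\rangle+\tfrac{L\eta^2}{2}\|\hat g_t\|^2$. Taking conditional expectations and using $\mathbb{E}\|\hat g_t\|^2=\|\nabla\mathcal{L}_c\|^2+\sigma^2$ gives
$\mathbb{E}\mathcal{L}_c(\theta_{t+1})\le\mathbb{E}\mathcal{L}_c(\theta_t)-\eta(1-\tfrac{L\eta}{2})\mathbb{E}\|\nabla\mathcal{L}_c(\theta_t)\|^2+\tfrac{L\eta^2}{2}\sigma^2$.
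With $\eta=\mathcal{O}(1/\sqrt T)$ and $T$ large enough that $\eta\le 1/L$, the factor $1-L\eta/2$ is at least $1/2$. We then telescope over $t=1,\dots,T$, bound $\mathcal{L}_c(\theta_{T+1})\ge\mathcal{L}_c^*$, and divide by $\eta T/2$. This yields exactly the displayed bound $\frac{2(\mathcal{L}_c(\theta_0)-\mathcal{L}_c^*)}{\eta T}+\eta L\sigma^2$, with $\sigma^2=\sigma_{c,\textup{\method}}^2$.

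Second step: compare the variances. Since $\rho\in[0,1]$ and $\beta^2\mathbb{E}\|g_{\bar c}\|^2\ge0$, we get $\sigma_{c,\textup{\method}}^2\le\sigma_{c,\textup{Baseline}}^2$ immediately. For the step count, we optimize the bound in $\eta$, taking $\eta=\sqrt{2\Delta/(L\sigma^2T)}$ with $\Delta=\mathcal{L}_c(\theta_0)-\mathcal{L}_c^*$. This gives a rate of $2\sqrt{2\Delta L\sigma^2/T}$, so $\epsilon$-stationarity needs $T(\sigma)=8\Delta L\sigma^2/\epsilon^2$, which is linear in $\sigma^2$. The ratio $T_{\textup{\method}}/T_{\textup{Baseline}}$ is then exactly the ratio of variances. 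This is the displayed fraction, and it is at most $1$ because $\rho\le1$. The remaining condition $\eta\le1/L$ only adds a $\sigma$-independent lower-order term to $T$, which is why the statement is phrased as an inequality.

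The main obstacle is justifying that both runs share the same $\Delta$, the same $L$, and unbiased gradients, so that only $\sigma^2$ differs. In particular, the interference from $g_{\bar c}$ is not genuinely zero-mean: it is a deterministic bias. I would handle this by stating explicitly that the theorem is taken under the paper's variance model, where interference is absorbed into $\sigma_c^2$. Equivalently, one could bound the bias via Young's inequality, which would contribute a term of the same form, $\rho\beta^2\mathbb{E}\|g_{\bar c}\|^2$, up to constants. I would also note that the linear dependence of $T$ on $\sigma^2$ holds in the noise-dominated regime, which is where the stated ratio is meaningful.
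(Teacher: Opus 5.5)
Your route is the paper's. The paper's proof is a one-line appeal to the Ghadimi--Lan bound for smooth nonconvex SGD with $\sigma_{c,\textup{\method}}^2$ substituted as the noise level, and your descent-lemma, telescoping and step-size optimization spell out what that citation leaves implicit. The first display comes out correctly, up to the usual off-by-one: telescoping from $\theta_0$ averages over $t=0,\dots,T-1$. Two of your side remarks are wrong, however, and the first one affects the speedup claim.

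\textbf{The speedup ratio.} Two things make your ratio $T_{\textup{\method}}/T_{\textup{Baseline}}$ equal the variance ratio, and you should state both. The first is that you tune $\eta$ to each run's own $\sigma^2$. With a common $\eta=c/\sqrt{T}$, the ratio of guaranteed step counts is $\bigl((2\Delta/c+cL\sigma_{c,\textup{\method}}^2)/(2\Delta/c+cL\sigma_{c,\textup{Baseline}}^2)\bigr)^2$, which exceeds the variance ratio when the $2\Delta/c$ term dominates. The second is that the cap $\eta\le 1/L$ is inactive. Here your account has the wrong sign.

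Your count $8\Delta L\sigma^2/\epsilon^2$ is valid only if $\sqrt{2\Delta/(L\sigma^2T)}\le 1/L$ at that $T$, i.e.\ only if $\epsilon\le 2\sigma^2$. Outside this regime the guaranteed count has a $\sigma$-independent part, as in the familiar $\mathcal{O}(L\Delta/\epsilon+L\Delta\sigma^2/\epsilon^2)$ complexity. For $a,b>0$ and $0\le x\le y$ one has $(a+bx)/(a+by)\ge x/y$. So a common additive term pushes the ratio \emph{above} the variance ratio, not below it. For instance, $\sigma_0=0$ and $\rho=0$ make the right-hand side $0$, yet noise-free descent is still only certified after about $2\Delta L/\epsilon$ steps.

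The speedup display therefore needs the explicit hypothesis $\epsilon\le 2\sigma_{c,\textup{\method}}^2$. This deactivates the cap for both runs and then gives equality. You must also read $T$ as the step count certified by the bound, not the true hitting time, since upper bounds alone cannot lower-bound $T_{\textup{Baseline}}$.

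\textbf{The bias alternative.} It is not equivalent to the variance model. Suppose interference enters as a bias $b_t$ with $\mathbb{E}\|b_t\|^2\le\rho\beta^2\mathbb{E}[\|g_{\bar c}\|^2]$. Young's inequality on $-\eta\langle\nabla\mathcal{L}_c(\theta_t),b_t\rangle$ then leaves a per-step term $\frac{\eta}{2}\|b_t\|^2$. After dividing by $\eta T$, this becomes an additive floor of order $\rho\beta^2\mathbb{E}[\|g_{\bar c}\|^2]$ with no factor $\eta L$, and it does not vanish as $T\to\infty$.

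The floor is genuine, not an artifact of the analysis: a constant bias on a quadratic drives the iterates to a point where $\nabla\mathcal{L}_c=-b$. In that model the displayed bound fails, $\epsilon$-stationarity below the floor is never reached, and the $T\propto\sigma^2$ comparison is lost. Keep only your first option. State as an assumption that interference acts as zero-mean noise with second moment $\rho\beta^2\mathbb{E}[\|g_{\bar c}\|^2]$, which is exactly what the paper's ``substituting $\sigma_{c,\textup{\method}}^2$ for the noise term'' tacitly assumes.
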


\begin{proof}
Follows directly from the standard SGD convergence result for smooth non-convex
objectives~\cite{ghadimi2013stochastic} by substituting $\sigma_{c,\textup{\method}}^2$
for the noise term.
\end{proof}

\begin{corollary}[Amplified Benefit for Uncommon Patterns]\label{cor:underrep}
Uncommon patterns experience lower gradient pressure from common patterns, leading to
smaller $|\mathcal{O}|$ and thus larger convergence speedup, providing a theoretical
explanation for \method's disproportionate gains on underrepresented behavioral patterns{, 
which we verify empirically in Section~\ref{sec:main-results} and Section~\ref{sec:ablation}}.
\end{corollary}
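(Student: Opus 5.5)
The statement immediately above is Corollary~\ref{cor:underrep}. I would prove it as a consequence of Theorem~\ref{thm:convergence}, whose speedup ratio
\[
R(\rho)=\frac{\sigma_0^2+\rho\beta^2 G}{\sigma_0^2+\beta^2 G},\qquad G=\mathbb{E}\bigl[\|g_{\bar c}\|^2\bigr],
\]
I treat as given. First I would fix what is being compared. For each pattern $p\in\{c,unc\}$, define the interfering gradient $g_{\bar p}$ and the overlap ratio $\rho_p=|\mathcal{A}_p\cap\mathcal{A}_{\bar p}|/K$ seen by that pattern. Here $\mathcal{A}_{\bar p}$ is the set of slots activated by the other patterns, weighted by activation frequency. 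The formal content of the corollary is then the inequality $R_{unc}\le R_c$, meaning uncommon patterns achieve the smaller (better) ratio.

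The key steps, in order, are as follows.

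\textbf{Step 1: monotonicity.} The map $\rho\mapsto R(\rho)$ is affine with nonnegative slope $\beta^2G/(\sigma_0^2+\beta^2G)$. So it suffices to show $\rho_{unc}\le\rho_c$, at least when the interference level $G$ is held comparable between the two.

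\textbf{Step 2: why $\rho_{unc}\le\rho_c$.} Common patterns occur with frequency $f_c\gg f_{unc}$. I would model the expected slot-wise gradient pressure on slot $i$ as proportional to $f_p\,\mathbf{1}[i\in\mathcal{A}_p]$. Under the key-specialization trend of Corollary~\ref{cor:isolation}, the keys are pulled toward high-pressure directions. The common pattern therefore captures its own slots, and the uncommon queries, which receive comparatively little pressure, end up routed to the residual slots. Hence the overlap as weighted from the uncommon side shrinks faster.

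\textbf{Step 3: larger absolute benefit.} For the uncommon pattern the interfering gradient is the common one, so $G_{unc}\approx\|g_c\|^2$ is large because $\|g_c\|\gg\|g_{unc}\|$. The baseline noise $\sigma_0^2+\beta^2G_{unc}$ is then dominated by interference. This gives $R_{unc}\approx\rho_{unc}$, while for the common pattern $R_c\approx 1$ because $G_c$ is small. This second argument by itself already yields the disproportionate gain for uncommon patterns, without needing Step 2. I would present it as the main rigorous route: compare the two ratios directly, using $G_{unc}\ge G_c$ and $\rho_{unc}\le\rho_c$ (or even just $\rho_{unc}=\rho_c$), and conclude $R_{unc}\le R_c$.

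\textbf{Main obstacle.} The hard part is Step 2. Nothing in the sparse-activation mechanism guarantees that routing specializes, and the paper itself treats $|\mathcal{O}|\to 0$ as an empirical trend. I would therefore state $\rho_{unc}\le\rho_c$ as an explicit assumption, motivated by Corollary~\ref{cor:isolation}. The claimed explanation then becomes a conditional statement, with the unconditional part carried by the asymmetry $G_{unc}\gg G_c$ from Section~\ref{sec:analysis}, and the remainder deferred to the empirical verification the statement cites.
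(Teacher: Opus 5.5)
Your argument holds, but the step that carries it differs from the paper's. The paper gives no separate proof. Its justification is the chain stated inside the corollary: uncommon patterns have smaller $|\mathcal{O}|$, and the ratio in Theorem~\ref{thm:convergence} is increasing in $\rho$. That is your Step~1, with Step~2 taken as a premise.

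Your Step~3 puts the asymmetry in the interference magnitude instead. Make it explicit by writing $R=1-(1-\rho)\,\frac{\beta^2G}{\sigma_0^2+\beta^2G}$. Then $R$ is increasing in $\rho$ and nonincreasing in $G$ for $\rho\le 1$. Hence $R(\rho_{unc},G_{unc})\le R(\rho_c,G_{unc})\le R(\rho_c,G_c)$ whenever $\rho_{unc}\le\rho_c$ and $G_{unc}=\mathbb{E}[\|g_c\|^2]\ge G_c=\mathbb{E}[\|g_{unc}\|^2]$. This replaces your heuristic $R_c\approx 1$, which would additionally require $\beta^2G_c\ll\sigma_0^2$.

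Your route gives a genuine inequality inside the paper's own model. It rests only on the asymmetry $\|g_c\|\gg\|g_{unc}\|$ already posited in Section~\ref{sec:analysis}. It is also the more coherent reading of the corollary. With $\mathcal{O}=\mathcal{A}_c\cap\mathcal{A}_{unc}$ and $|\mathcal{A}_c|=|\mathcal{A}_{unc}|=K$, the overlap is symmetric, so $\rho_c=\rho_{unc}$ by definition. Step~2 is therefore vacuous in the two-pattern setting, not merely hard. Your frequency weighting does not rescue it:
\begin{itemize}
\item normalized weights leave the overlap unchanged;
\item unnormalized weights make $\rho_{unc}$ the \emph{larger} one, since its neighbour is the frequent pattern, and they double-count frequency already carried by $G$.
\end{itemize}
So drop Step~2 and run Step~3 with equality.

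The paper's route has one thing yours lacks, namely its stated mechanism. You do not show that $|\mathcal{O}|$ is smaller for uncommon patterns; you show the conclusion survives without it. Your comparison also relies on $\sigma_0^2$ and $\beta$ being shared across patterns, as in the paper's model. With pattern-specific noise, the correct condition at equal $\rho$ is on the interference-to-noise ratio, $\beta^2G_{unc}/\sigma_{0,unc}^2\ge\beta^2G_c/\sigma_{0,c}^2$. This matters because $R$ increases with $\sigma_0^2$, and a sparsely sampled uncommon pattern may well have larger intrinsic noise.
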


% -------------------------------------------------------
\subsubsection{\textbf{Parameter Efficiency and Scalability}}
\label{sec:theory_scale}
% -------------------------------------------------------

\method's sparse memory requires $2Nd$ parameters regardless of user population $|U|$,
compared to $|U| \cdot d$ for dense per-user personalization.
\method\ is strictly cheaper whenever $|U| > 2N$, and adding new users incurs zero
additional parameters.
Meanwhile, sparse Top-$K$ activation yields $\binom{N}{K}$ distinct memory configurations,
satisfying $\binom{N}{K} \geq |U|$ under standard choices of $N$ and $K$
(\eg $\binom{256}{8} \approx 10^{15}$), ensuring unique representations without
parameter sharing.
Consequently, the parameter ratio satisfies:
\begin{equation}
  \frac{\mathrm{Params}_{\textup{\method}}}{\mathrm{Params}_{\textup{Dense}}}
  = \frac{2N}{|U|} \;\to\; 0
  \qquad \text{as } |U| \to \infty,
\end{equation}
achieving scalable evolution without sacrificing representational expressiveness.
\section{Experiments}\label{sec:exp}

We conduct experiments to answer the following research questions.
\textbf{RQ1}: How does \method\ perform compared to existing continual evolution strategies
for generative recommendation under streaming data settings?
\textbf{RQ2}: How does each component of \method\ contribute to the overall performance?
\textbf{RQ3}: How does \method\ perform across heterogeneous user groups and item categories
with varying behavioral patterns and popularity distributions?
\textbf{RQ4}: How effectively and efficiently does \method\ alleviate the evolution conflict issue during
continual adaptation?

\subsection{Experimental Setting}\label{sec:exp_setting}

\subsubsection{\textbf{Datasets}} 
We conduct experiments on three representative real-world datasets across diverse domains from the Amazon Review series\footnote{\url{https://cseweb.ucsd.edu/~jmcauley/datasets/amazon_v2/}.}. 
1) \textbf{Games}, 2) \textbf{CDs}, and 3) \textbf{Toys}, which contain rich interactions on video games, music albums, toys and children's entertainment products, respectively. 
We follow previous works~\cite{rajput2023recommender,shi2024preliminary} and construct each instance as a next-item prediction sample, where the input contains at most 10 previous interactions and the target is the next item.

For each dataset, we sort all instances by target interaction timestamp and split them
into five chronological periods (Period 0--4) of equal interaction counts.
Users are further divided into five groups $G_1$--$G_5$ ({from active to inactive}),
where activity level is measured by maximum history length in the previous period; 
For Period 0, users are grouped by activity level within that period. 
The dataset statistics are presented in Appendix Table~\ref{tab:dataset_stats}.

\subsubsection{\textbf{Evaluation Metrics}} 
To evaluate the evolution ability of different methods across periods, we test each method on Periods 1, 2, 3, and 4 after continual adaptation. 
During inference, the model generates candidate item-token sequences with beam search using beam size 20. 
We adopt two widely used ranking metrics, Recall@K (R@K) and NDCG@K (N@K), with $K\in\{10,20\}$. 
Recall@K measures whether the ground-truth item appears in the top-$K$ generated list, while NDCG@K further considers the hit position.
Metrics are computed over each future-period and user-group cell, and group-level scores can be averaged to obtain the period-level performance.

\begin{table*}[t]
\setlength{\abovecaptionskip}{0.05cm}
\setlength{\belowcaptionskip}{0.2cm}
\caption{Main results on three Amazon Review datasets. All numbers are sample-weighted micro-average over $4$ evaluation periods $\times$ $5$ user activity quintiles as defined in Section~\ref{sec:exp_setting}. ``R@K'' and ``N@K'' denote ``Recall@K'' and ``NDCG@K'', respectively.}
\label{tab:main-results}
\centering
\setlength{\tabcolsep}{2.8mm}{
\resizebox{0.95\textwidth}{!}{
\begin{tabular}{l|cccc|cccc|cccc}
\toprule
\multirow{2}{*}{\textbf{Method}} & \multicolumn{4}{c|}{\textbf{Games}} & \multicolumn{4}{c|}{\textbf{CDs}} & \multicolumn{4}{c}{\textbf{Toys}} \\
 & R@10 & N@10 & R@20 & N@20 & R@10 & N@10 & R@20 & N@20 & R@10 & N@10 & R@20 & N@20 \\
\midrule
Replay   & 0.0616 & 0.0386 & 0.0819 & 0.0437 & 0.0405 & 0.0297 & 0.0475 & 0.0316 & 0.0551 & 0.0400 & 0.0651 & 0.0425 \\
SAIL-PIW  & 0.0572 & 0.0416 & 0.0766 & 0.0465 & 0.0392 & 0.0302 & 0.0459 & 0.0319 & 0.0689 & 0.0494 & 0.0847 & 0.0534 \\
PISA     & 0.0407 & 0.0245 & 0.0597 & 0.0293 & 0.0217 & 0.0128 & 0.0298 & 0.0149 & 0.0630 & 0.0395 & 0.0865 & 0.0454 \\
RecICL   & 0.0686 & 0.0470 & 0.0904 & 0.0525 & 0.0352 & 0.0263 & 0.0425 & 0.0282 & 0.0530 & 0.0409 & 0.0637 & 0.0436 \\
LSAT     & 0.0492 & 0.0344 & 0.0631 & 0.0379 & 0.0358 & 0.0259 & 0.0435 & 0.0278 & 0.0648 & 0.0421 & 0.0865 & 0.0475 \\
PESO     & 0.0584 & 0.0353 & 0.0800 & 0.0408 & 0.0263 & 0.0166 & 0.0347 & 0.0187 & 0.0625 & 0.0376 & 0.0857 & 0.0435 \\
\midrule
TIGER              & \underline{0.0558} & \underline{0.0336} & \underline{0.0784} & \underline{0.0393} & \underline{0.0331} & \underline{0.0213} & \underline{0.0424} & \underline{0.0236} & \underline{0.0666} & \underline{0.0418} & \underline{0.0884} & \underline{0.0473} \\
\textbf{LION (Ours)} & \textbf{0.0724} & \textbf{0.0454} & \textbf{0.0986} & \textbf{0.0520} & \textbf{0.0449} & \textbf{0.0306} & \textbf{0.0558} & \textbf{0.0334} & \textbf{0.0769} & \textbf{0.0495} & \textbf{0.1018} & \textbf{0.0557} \\
\midrule
\rowcolor[HTML]{fff1ce}
$\Delta$ vs TIGER & +29.8\% & +35.3\% & +25.8\% & +32.4\% & +35.5\% & +43.7\% & +31.7\% & +41.2\% & +15.6\% & +18.4\% & +15.1\% & +17.8\% \\
\bottomrule
\end{tabular}
}}
\end{table*}

% --- active/inactive per-period subtables (RecICL/Replay/SAIL-PIW/LSAT/TIGER/LION) ---

\begin{table}[t]
\setlength{\abovecaptionskip}{0.05cm}
\setlength{\belowcaptionskip}{0.2cm}
\caption{Per-period active ($G_1$+$G_2$) and inactive ($G_3$+$G_4$+$G_5$) Recall@10 across three datasets, comparing competitive baselines against \method{}. Values are sample-weighted micro-averages within each group and period.}
\label{tab:per-period-active-inactive}
\centering\footnotesize
\setlength{\tabcolsep}{2.9pt}
\begin{tabular}{l|cc|cc|cc|cc}
\toprule
\multirow{3}{*}{Method} & \multicolumn{8}{c}{\textbf{Games}} \\
 & \multicolumn{2}{c|}{P0→P1} & \multicolumn{2}{c|}{P1→P2} & \multicolumn{2}{c|}{P2→P3} & \multicolumn{2}{c}{P3→P4} \\
 & Act & Inact & Act & Inact & Act & Inact & Act & Inact \\
\midrule
% RecICL   & 0.0457 & 0.0691 & 0.0529 & 0.0881 & 0.0427 & 0.0884 & 0.0485 & 0.0694 \\
Replay   & 0.0363 & 0.0792 & 0.0398 & 0.0865 & 0.0386 & 0.0750 & 0.0275 & 0.0551 \\
SAIL-PIW & 0.0577 & 0.0900 & 0.0365 & 0.0732 & 0.0243 & 0.0563 & 0.0234 & 0.0506 \\
LSAT     & 0.0476 & 0.0784 & 0.0297 & 0.0656 & 0.0205 & 0.0513 & 0.0193 & 0.0398 \\
PESO     & 0.0403 & 0.0741 & 0.0425 & 0.0694 & 0.0444 & 0.0667 & 0.0362 & 0.0566 \\
TIGER    & 0.0440 & 0.0772 & 0.0397 & 0.0695 & 0.0396 & 0.0610 & 0.0299 & 0.0488 \\
\rowcolor[HTML]{EBFFEB}
LION     & \textbf{0.0487} & \textbf{0.0964} & \textbf{0.0498} & \textbf{0.0939} & \textbf{0.0486} & \textbf{0.0820} & \textbf{0.0387} & \textbf{0.0668} \\
\midrule
\midrule
\multirow{3}{*}{Method} & \multicolumn{8}{c}{\textbf{CDs}} \\
 & \multicolumn{2}{c|}{P0→P1} & \multicolumn{2}{c|}{P1→P2} & \multicolumn{2}{c|}{P2→P3} & \multicolumn{2}{c}{P3→P4} \\
 & Act & Inact & Act & Inact & Act & Inact & Act & Inact \\
\midrule
% RecICL   & 0.0078 & 0.0144 & 0.0262 & 0.0383 & 0.0310 & 0.0546 & 0.0373 & 0.0556 \\
Replay   & 0.0148 & 0.0495 & 0.0207 & 0.0587 & 0.0177 & 0.0555 & 0.0221 & 0.0481 \\
SAIL-PIW & 0.0179 & 0.0388 & 0.0246 & 0.0434 & 0.0274 & 0.0517 & 0.0357 & 0.0523 \\
LSAT     & 0.0155 & 0.0341 & 0.0198 & 0.0423 & 0.0201 & 0.0477 & 0.0287 & 0.0527 \\
PESO     & 0.0116 & 0.0268 & 0.0182 & 0.0283 & 0.0192 & 0.0349 & 0.0234 & 0.0341 \\
TIGER    & 0.0188 & 0.0425 & 0.0222 & 0.0387 & 0.0216 & 0.0423 & 0.0211 & 0.0373 \\
\rowcolor[HTML]{EBFFEB}
LION     & \textbf{0.0238} & \textbf{0.0577} & \textbf{0.0266} & \textbf{0.0545} & \textbf{0.0284} & \textbf{0.0572} & \textbf{0.0294} & \textbf{0.0515} \\
\midrule
\midrule
\multirow{3}{*}{Method} & \multicolumn{8}{c}{\textbf{Toys}} \\
 & \multicolumn{2}{c|}{P0→P1} & \multicolumn{2}{c|}{P1→P2} & \multicolumn{2}{c|}{P2→P3} & \multicolumn{2}{c}{P3→P4} \\
 & Act & Inact & Act & Inact & Act & Inact & Act & Inact \\
\midrule
% RecICL   & 0.0211 & 0.0330 & 0.0414 & 0.0666 & 0.0369 & 0.0786 & 0.0318 & 0.0702 \\
Replay   & 0.0388 & 0.0708 & 0.0378 & 0.0767 & 0.0198 & 0.0705 & 0.0145 & 0.0525 \\
SAIL-PIW & 0.0530 & 0.0818 & 0.0484 & 0.0811 & 0.0387 & 0.0873 & 0.0336 & 0.0722 \\
LSAT     & 0.0537 & 0.0887 & 0.0423 & 0.0818 & 0.0274 & 0.0817 & 0.0207 & 0.0598 \\
PESO     & 0.0479 & 0.0791 & 0.0442 & 0.0735 & 0.0328 & 0.0730 & 0.0319 & 0.0670 \\
TIGER    & 0.0505 & 0.0903 & 0.0482 & 0.0810 & 0.0313 & 0.0785 & 0.0251 & 0.0672 \\
\rowcolor[HTML]{EBFFEB}
LION     & \textbf{0.0606} & \textbf{0.1011} & \textbf{0.0582} & \textbf{0.0960} & \textbf{0.0368} & \textbf{0.0877} & \textbf{0.0324} & \textbf{0.0777} \\
\bottomrule
\end{tabular}
\end{table}

\subsubsection{\textbf{Baselines}} 
We compare our method with various competitive baselines, including two main categories. 
\textit{Traditional Continual Learning (CL) methods}:
1) \textbf{TIGER}~\cite{rajput2023recommender} is a direct fine-tuning baseline based on the standard T5 encoder-decoder generative recommender, where the model is continually updated at each period using only the current-period data.
2) \textbf{Replay}~\cite{robins1995catastrophic,lopezpaz2017gradient} mitigates forgetting by replaying historical interactions together with current-period data; for period $t\geq1$, the number of sampled historical instances equals $|D_t|$.
3) \textbf{SAIL-PIW}~\cite{wang2023sail} preserves historical knowledge through distillation and learns personalized imitation weights to balance old-knowledge preservation and new-period adaptation.
4) \textbf{PISA}~\cite{yoo2025embracing} models continual recommendation through plasticity updates and stability-oriented distillation, explicitly controlling the stability-plasticity trade-off across periods. 
\textit{Evolution methods for generative recommendation.}
5) \textbf{RecICL}~\cite{bao2025customizing} injects in-context examples into the prompt, enabling dynamic interest adaptation through contextual demonstrations rather than conventional full model retraining.
6) \textbf{LSAT}~\cite{shi2024preliminary} uses short- and long-term branches to separately model recent and earlier user histories, and combines their predictions for continual adaptation.
7) \textbf{PESO}~\cite{yoo2026peso} is a LoRA-based continual adaptation method for LLM-based generative recommendation, which uses a single evolving LoRA adapter with proximal regularization to balance adaptation and preservation. 
To achieve a fair comparison, we adapt all these baselines to the generative paradigm, \ie TIGER~\cite{rajput2023recommender} (T5 as backbone).

% \noindent $\bullet \quad${\textbf{Implementation Details}}.
\subsubsection{\textbf{Implementation Details}} 

All experiments are implemented on the TIGER (T5 backbone)~\cite{rajput2023recommender}. 
Unless explicitly specified, we update all model parameters during continual training. 
We use AdamW with cosine learning-rate scheduling, warmup ratio 0.01, weight decay 0.001, and random seed 42. 
Early stopping monitors validation loss with patience 5 and threshold $10^{-4}$ after at least 10 epochs, where the validation file follows the current-period training CSV.
For CL baselines, we tune learning rate and batch size on Toys and use the best validation configuration for evaluation. 
For shared hyperparameters, the learning rate is selected from $\{1e{-}4,1.5e{-}4,2e{-}4,2.5e{-}4,3e{-}4\}$ and the batch size is selected from $\{16,24,32,48,64\}$ across baselines. 
Detailed hyper-parameter settings for the baselines can be found in Appendix~\ref{app:baseline_hyperparameter_setting}.

\subsection{Overall Performance (RQ1)}
\label{sec:main-results}

Table~\ref{tab:main-results} compares \method{} against six representative baselines on three Amazon Review datasets, where we report average performance over four evaluation periods (P1-4) and five user groups.
From the overall comparison, we can observe that: 

\begin{itemize}[leftmargin=*]

\item{Among all baselines specifically designed for continual learning (from Replay to PESO), LSAT and SAIL-PIW achieve the most competitive performance.}
This is because LSAT captures both long-term and short-term user interests through separate branches and merges their predictions, which partially isolates recent common patterns from long-tailed ones and partially mitigates evolution conflict.
While SAIL-PIW maintains personalized imitation weights per user, enabling more complete isolation of individual behavioral patterns and thus better alleviating gradient interference across users.
Other CL-specific methods (PISA, PESO) transfer poorly to streaming generative recommendation, as CL recipes designed for classification or static recommendation introduce severe interference under autoregressive evolution, consistent with the analysis in Section~\ref{sec:analysis}.

\item{Under the same continual fine-tuning setting, \method{} consistently outperforms TIGER and surpasses all baselines in most cases.}
This validates the effectiveness of the sparse memory layer and consolidation loss in isolating heterogeneous behavioral patterns and reinforcing underrepresented ones during continual adaptation. 
We note that the plain continual fine-tuning baseline (TIGER) also achieves competitive overall performance, as the majority of interactions in these datasets still reflect common behavioral patterns, whose dominance during evolution does not severely hurt aggregate metrics. 
Detailed performance across periods and user groups is presented below.

\end{itemize}

We further analyze the performance on each period over active ($G_1$--$G_2$) and inactive ($G_3$--$G_5$) user groups (Table~\ref{tab:per-period-active-inactive}): 

\begin{itemize}[leftmargin=*]

\item {\method{} produces consistently larger gains on inactive users than on active users.}
Across all three datasets and periods, the improvement of \method{} over competitive baselines is more pronounced on the inactive group ($G_3$--$G_5$) than on the active group ($G_1$--$G_2$).
This is consistent with the design of the sparse memory layer: by routing different behavioral patterns through partially isolated memory pathways, \method{} prevents the evolution of underrepresented patterns from being overwhelmed by the dominant gradient updates of active users, directly addressing the evolution conflict identified in Section~\ref{sec:analysis}.

\item {The gains of \method{} remain stable across all evolving periods.}
Rather than concentrating in early or late periods, \method{} maintains consistent improvements over competitive baselines at every period for both user groups across all datasets.
This suggests that the sparse memory mechanism provides a sustained benefit throughout continual adaptation, rather than a one-time initialization advantage that fades over time.

\end{itemize}

\subsection{In-depth Analysis}

\subsubsection{\textbf{Ablation Study (RQ2)}}
\label{sec:ablation}

\begin{table}[t]
\setlength{\abovecaptionskip}{0.05cm}
\setlength{\belowcaptionskip}{0.2cm}
\caption{Ablation study on three datasets (Recall@10, $4$-period sample-weighted average).}
\label{tab:ablation}
\centering
\small
\setlength{\tabcolsep}{4.5pt}
\begin{tabular}{l|ccc}
\toprule
\textbf{Configuration} & \textbf{Games} & \textbf{CDs} & \textbf{Toys} \\
\midrule
TIGER                                & 0.0558 & 0.0331 & 0.0666 \\
\quad {$+$\,Con}             & {0.0532} & {0.0310} & {0.0559} \\
\quad $+$\,SML                       & 0.0558 & 0.0342 & 0.0654 \\
\quad $+$\,SML\,$+$\,FS              & 0.0586 & 0.0358 & 0.0687 \\
\quad $+$\,SML\,$+$\,FS\,$+$\,Con\ (\method) & \textbf{0.0724} & \textbf{0.0449} & \textbf{0.0769} \\
\bottomrule
\end{tabular}
\end{table}

We isolate the contribution of each component of \method{} by incrementally adding the sparse memory layer (SML), the full-sequence memory query (FS), and the consolidation loss (Con) on top of the TIGER backbone.
{We further remove sparse memory layer independently (``$+$\,Con'') and present the results in Table~\ref{tab:ablation}.} 
We can find that 
1) Simply adding the consolidation loss can even hurt performance. This is because, without the sparse memory layer, the auxiliary loss directly supervises the shared backbone representation, re-exposing it to the same cross-pattern gradient conflict (Section~\ref{sec:analysis}) that afflicts the main recommendation loss; consolidation is only beneficial once it supervises the memory-isolated representation $\tilde{h}_u$ instead.
2) {The sparse memory layer alone yields only marginal overall improvement.}
Without direct supervision, the sparse memory layer relies solely on the recommendation loss to specialize its routing keys.
On relatively small datasets, the limited interaction diversity makes it difficult for the memory keys to differentiate behavioral patterns sufficiently, resulting in overlapping activations across user groups and limited isolation benefit.
3) {Full-sequence routing and consolidation loss {on top of SML} both contribute positively, with the consolidation loss being the dominant driver.}
Optimizing only the final recommendation loss is insufficient to capture underrepresented behavioral patterns, as their weak gradient signals are consistently overwhelmed during backpropagation.
The consolidation loss directly supervises the memory-enhanced representations, providing stronger and more targeted optimization signals for underrepresented patterns.

\begin{figure}[t]
\vspace{-0.2cm}
\setlength{\abovecaptionskip}{-0.2cm}
\setlength{\belowcaptionskip}{-0.2cm}
  \centering
  \includegraphics[width=0.6\linewidth]{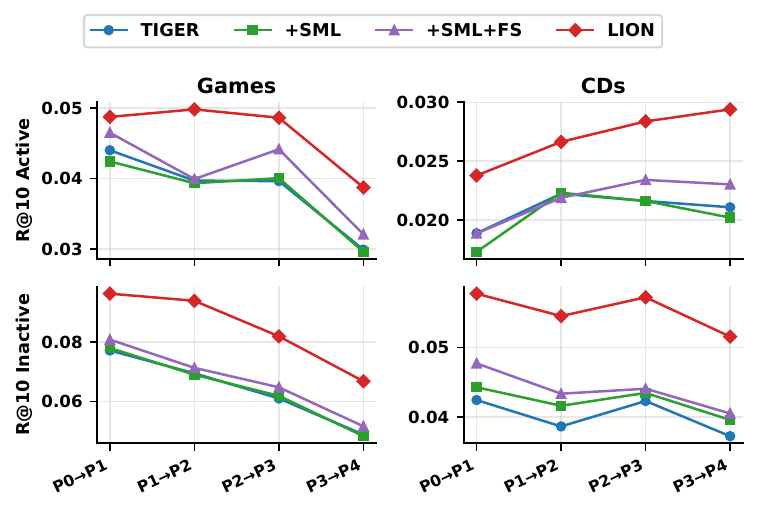}
  \caption{Per-period overall Recall@10 trajectory across the continual chain $P_0\!\to\!P_4$ for the four ablation variants on Games (left) and CDs (right). Top row: active group ($G_1{+}G_2$); bottom row: inactive group ($G_3{+}G_4{+}G_5$). LION (red) lies strictly above TIGER (blue) in every (dataset, period, group) cell.}
  \label{fig:abl-act-inact-trajectory}
\end{figure}

\vspace{2pt}
\noindent$\bullet\quad${\textbf{Per-period active/inactive breakdown.}} To further analyze the effectiveness of each component, we report the performance across different behavior patterns and periods in Figure~\ref{fig:abl-act-inact-trajectory}. We can find that 
1) LION consistently lies strictly above TIGER on every (dataset, group) trajectory in
Figure~\ref{fig:abl-act-inact-trajectory}. 
In particular, the late-period active-group gain is comparable to or larger than
early periods, suggesting the memory-layer benefit accumulates under continual drift.  
2) Inactive users usually yield larger gains compared to active users. 
This is consistent with our theoretical analysis, \ie routing keys specialise more rapidly for underrepresented patterns, which also aligns with the observations in Table~\ref{tab:per-period-active-inactive}. 

\subsubsection{\textbf{Effect of $\lambda$ (RQ2)}}
\label{sec:hyperparam}

\begin{table}[t]
\setlength{\abovecaptionskip}{0.05cm}
\setlength{\belowcaptionskip}{0.2cm}
\caption{Effect of $\lambda$, \ie the strength of consolidation loss.}
\label{tab:hyper_lambda}
\centering
\small
\begin{tabular}{l|ccccc}
\toprule
Dataset & $\lambda{=}0$ & $\lambda{=}0.1$ & $\lambda{=}0.3$ & $\lambda{=}0.5$ & $\lambda{=}1.0$ \\
\midrule
Games        & 0.0586 & {0.0653} & 0.0691 & 0.0712 & \textbf{0.0724} \\
CDs          & 0.0358 & {0.0449} & 0.0465 & \textbf{0.0466} & 0.0462 \\
\bottomrule
\end{tabular}
\end{table}

To facilitate LION deployment, we analyze the effect of the consolidation loss weight $\lambda$.
From the results in Table~\ref{tab:hyper_lambda}, we can observe that increasing $\lambda$ from $0$ yields a clear performance gain,
confirming the effectiveness of the consolidation loss.
{Performance remains consistently strong across $\lambda \in [0.3, 1.0]$ on both datasets, indicating that consolidation is broadly beneficial without hurting recommendation quality; however, the per-dataset optimum varies within this range (Games peaks at $\lambda{=}1.0$, CDs at $\lambda{=}0.5$).
Rather than fixing a single global value, we therefore select $\lambda$ per dataset via validation over $\{0.1,0.3,0.5,1.0\}$ (Appendix~\ref{app:baseline_hyperparameter_setting}).}

\subsubsection{\textbf{Gradient Conflict Reduction (RQ3)}}
\label{sec:group-analysis}

\begin{figure}[t]
\vspace{-0.2cm}
\setlength{\abovecaptionskip}{-0.2cm}
\setlength{\belowcaptionskip}{-0.2cm}
  \centering
  \includegraphics[width=0.5\linewidth]{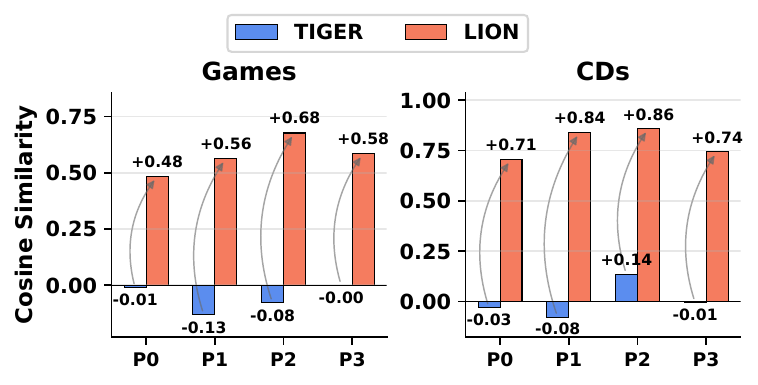}
  \caption{Per-period cosine similarity between active ($G_1{+}G_2$) and inactive ($G_3{+}G_4{+}G_5$) gradients on the sparse memory layer.}
  \label{fig:gradient-cos-user-side}
\end{figure}

We directly measure the cross-group gradient cosine on the sparse memory layer
predicted by Theorem~\ref{thm:conflict}.
For each post-warm-up checkpoint, we partition users into active ($G_1{+}G_2$) and
inactive ($G_3{+}G_4{+}G_5$) groups and compute the aggregated cosine $\cos(g_c, g_{unc})$.
Results are shown in Figure~\ref{fig:gradient-cos-user-side}. 
We can find that 
1) shared parameters suffer from gradient conflict (TIGER's results), which confirms that active and inactive group gradients
conflict within the shared parameter space. 
Nevertheless, 
2) sparse memory isolates conflict and aligns optimization directions. 
\method{} achieves uniformly positive cosine across all periods, satisfying
$g_c^\top g_{unc} \geq 0$ and potentially driving the conflict term in Theorem~\ref{thm:conflict}
to zero. 
This is because conflicting gradients are absorbed into isolated memory slots, leaving the shared parameters updated in a mutually compatible direction.

\subsubsection{\textbf{Item-Side Gradient Alignment (RQ3)}}
\label{sec:item-gradient}

\begin{figure}[t]
\setlength{\abovecaptionskip}{-0.2cm}
\setlength{\belowcaptionskip}{-0.2cm}
  \centering
  \includegraphics[width=0.5\linewidth]{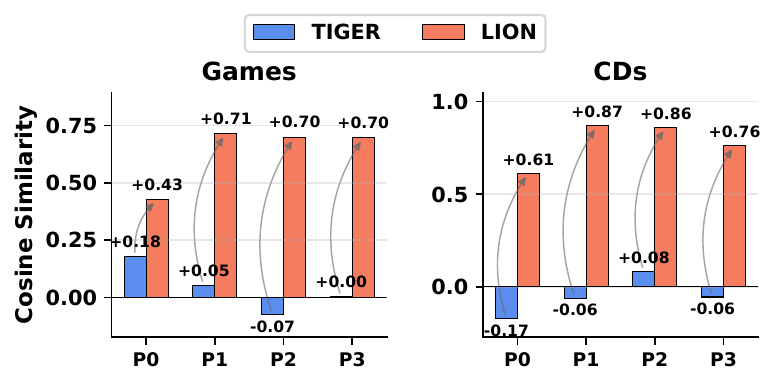}
  \caption{Item-side gradient cosine between popular (top $60\%$) and long-tailed (tail $40\%$) item gradients on the sparse memory layer.}
  \label{fig:gradient-cos-item-side}
\end{figure}

We further analyze the gradient conflict across item groups, \ie popular (top $60\%$) vs.\ long-tailed (tail $40\%$), using $P_0$ interaction counts as the popularity reference. Results are shown in Figure~\ref{fig:gradient-cos-item-side}. We can find that: 
1) Sparse memory mitigates gradient conflict between popular and long-tailed items, which is consistent with the user-side results in Section~\ref{sec:group-analysis}. This is expected, as long-tailed items naturally correspond to uncommon behavioral patterns, and the memory layer's isolated pathways prevent their gradients from being overwhelmed by popular items, further validating the effectiveness of sparse memory isolation. 
2) Gradient conflict is relatively weaker on Games. One possible reason is that it is a relatively dense dataset, where higher interaction density reduces the popularity gap between popular and long-tailed items, leading to less severe gradient conflict and a smaller isolation benefit.

\subsubsection{\textbf{Empirical Convergence Analysis (RQ4)}}
\label{sec:convergence}

\begin{figure}[t]
\setlength{\abovecaptionskip}{0.0cm}
\setlength{\belowcaptionskip}{-0cm}
  \centering
  \includegraphics[width=0.5\linewidth]{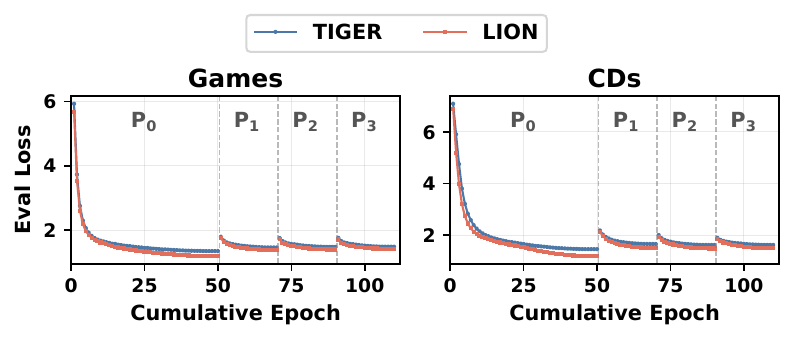}
  \caption{Per-epoch evaluation loss ($\mathcal{L}_\text{Rec}$) across the continual chain $P_0\!\to\!P_3$. \method{} (red) converges \emph{faster} across all periods and to a \emph{lower} converged loss at every period on both datasets. }
  \label{fig:convergence}
\end{figure}

To verify the faster convergence of LION as stated in Theorem~\ref{thm:convergence}, we compare the per-epoch evaluation loss of TIGER and \method{} on Games and CDs. Since $\mathcal{L}_{\mathrm{con}}$ is added only during training, we report pure language-modeling eval loss for a fair comparison. Results are shown in Figure~\ref{fig:convergence}. 
1) \method{} consistently converges faster and to a lower converged loss at every period, confirming that the memory layer reduces gradient conflict and improves optimization efficiency, aligning with Theorem~\ref{thm:convergence}. 
2) A loss jump appears at the start of each new period. This is because incoming interactions reflect shifted user behaviors, causing the model's loss to temporarily rise before re-converging. \method{} recovers faster after each jump, and the margin over TIGER does not shrink across periods, suggesting the memory-layer benefit accumulates rather than diminishes under continual drift.

\subsubsection{\textbf{Memory Compactness (RQ4)}}
\label{sec:compactness}

\begin{table}[t]
\setlength{\abovecaptionskip}{0.05cm}
\setlength{\belowcaptionskip}{0.2cm}
\caption{Performance (Recall@10) under varying memory vocabulary size $N=n_k^2$ and activation budget $K$ ($4$-period sample-weighted average).}
\label{tab:scalability-mem-sweep}
\centering
\small
\begin{tabular}{l|ccc}
\toprule
Configuration ($n_k$, $K$) & Games & CDs & Toys \\
\midrule
$n_k=64,\ K=32$ \,\,\, (4{,}096 keys)   & 0.0649 & 0.0439 & 0.0763 \\
$n_k=128,\ K=8$ \,\,\, (16{,}384 keys)  & 0.0647 & 0.0440 & 0.0775 \\
$n_k=128,\ K=16$ \,\,\, (16{,}384 keys) & 0.0649 & 0.0440 & 0.0779 \\
$n_k=256,\ K=32$ \,\,(65{,}536 keys)    & 0.0649 & 0.0444 & 0.0764 \\
\bottomrule
\end{tabular}
\end{table}

To further investigate the expressiveness and efficiency of the sparse memory layer, we consistently reduce the memory layer size to see how it affects performance, as shown in Table~\ref{tab:scalability-mem-sweep}. 
Reducing the memory size has negligible impact on performance.
Across a $16\times$ range in memory vocabulary and a $4\times$ range in $K$,
Recall@10 stays within $\pm 0.4\%$ across all datasets, demonstrating that \method{}
maintains strong representational capacity even under aggressive parameter reduction.
This is consistent with the scalability analysis in Section~\ref{sec:theory_scale}:
the combinatorial expressiveness $\binom{N}{K}$ far exceeds the number of behavioral
patterns the model needs to distinguish, so shrinking $N$ or $K$ does not become
a representational bottleneck.
Compared to methods that require per-user embeddings or per-layer LoRA adapters,
\method{} achieves competitive expressiveness with a fixed and compact parameter
footprint that does not grow with user population.

\section{Related Works}\label{sec:related_works}

\noindent$\bullet\quad$\textbf{Continual Learning in Recommendation}. 
Continual learning is crucial for recommender systems to adapt to evolving user interests and item distributions~\cite{lin2026autonomous,yoo2025continual}.
Traditional continual recommendation methods can be broadly categorized into two types.
(1) \textit{Retraining-based methods} periodically update the recommender by either fine-tuning on newly arrived interactions or fully retraining on accumulated historical data~\cite{rendle2008online,chandramouli2011streamrec,diaz2012real,zhang2020retrain,lee2023important}.
Fine-tuning can efficiently absorb recent feedback but may overwrite previous preference knowledge, while full retraining retains historical information at the cost of substantial computation.
(2) \textit{Regularization-based methods} preserve knowledge from previous models while adapting to new-period data~\cite{wang2023sail,yoo2025embracing,lee2024continual}. 
With the emergence of LLM-based and generative recommendation, continual learning has been revisited under autoregressive recommendation architectures~\cite{shi2024preliminary,bao2025customizing,yoo2026peso,shi2025incremental,feng2026drift}.
Representative methods usually adopt lightweight adaptation mechanisms to distinguish newly emerging interests from historical preference knowledge~\cite{shi2024preliminary,bao2025customizing}. 
However, these methods mainly separate new knowledge from old knowledge, or long-term interests from short-term interests.
Different user behavioral patterns are still optimized through shared generative parameters or shared adaptation modules. 

In contrast, our work identifies the \emph{evolution conflict} problem in evolving generative recommendation: under shared parameters, dominant user patterns may steer the model evolution direction and suppress underrepresented patterns.
To address this issue, we propose an isolated evolution mechanism that enables different user patterns to evolve through sparsely activated memory pathways. 
Related pattern-isolation paradigms (\eg multi-LoRA adapters~\cite{shi2024preliminary}, MoE~\cite{dai2024deepseekmoe,jiang2024mixtral}), have the potential to alleviate evolution conflict. We include
LSAT~\cite{shi2024preliminary}, a representative multi-LoRA adapter baseline for comparison in Section~\ref{sec:exp}. Our core contribution is to establish isolated parameter evolution as
a new paradigm for generative recommendation, with the sparse KV memory layer
serving as an effective and scalable instantiation. Combining our approach with
complementary isolation mechanisms remains a promising direction for future work.

\vspace{2pt}
% \subsection{Generative Recommendation}
\noindent$\bullet\quad$\textbf{Generative Recommendation}. 
Generative recommendation has recently become an important paradigm for end-to-end recommendation, which autoregressively generates the next item's identifier as recommendation, exemplified by representative work such as TIGER~\cite{rajput2023recommender}, HSTU~\cite{zhai2024actions}, OneRec~\cite{deng2025onerec}, PLUM~\cite{he2026plum}, and MTGR~\cite{han2025mtgr}. 
Despite their effectiveness, such a user-agnostic modeling design makes continual evolution highly coupled (Section~\ref{sec:analysis}).  
When the recommender is updated with streaming interactions, gradients from heterogeneous user groups may conflict, and frequent patterns can dominate the evolution of the shared generative model. This motivates our work to introduce a self-evolving memory specifically for generative recommendation.

% \subsection{Memory for Recommendation.}
\vspace{2pt}
\noindent$\bullet\quad$\textbf{Memory for Recommendation}. 
Memory mechanisms in recommendation systems can be broadly categorized into three paradigms. 
The first treats memory as {external storage for long interaction histories}: given that LLM-based recommenders are constrained by finite context windows, these methods store users' full interaction histories externally and retrieve relevant entries at inference time~\cite{wang2025memoryretrieval}, 
or allow interaction histories to be continuously updated and filtered via a retrieval mechanism~\cite{chen2025memoryassisted,lin2026autonomous}. 
The second paradigm uses memory as {dynamic user preference modeling}: rather than storing raw interactions, these methods maintain explicit short- and long-term preference representations, fusing them via attention mechanisms~\cite{sabouri2025effectiveness,sabouri2025temporal}, 
or applying forgetting-curve-inspired update rules to decay stale preferences over time~\cite{chen2025forgetting,zhong2024memorybank}. 
The third paradigm, motivated by industrial-scale generative recommendation, treats memory as {persistent KV cache for computation reuse}: since re-encoding a user's full interaction sequence at every inference request is prohibitively expensive, systems such as~\cite{wang2026mtserve} 
persist and hierarchically manage per-user KV states across requests, while~\cite{chen2026recurrent} 
avoids KV cache explosion by compressing history into compact preference memory tokens that can be incrementally updated.

While prior work uses memory primarily as a means to retain and supply more historical interactions to the model, this work introduces memory for a fundamentally different purpose: it maintains a structured representation of distinct user behavioral patterns, enabling the model to differentiate between users with heterogeneous interaction characteristics, thus effectively evolving its pattern-specific representations continuously as new data arrives. 

\section{Conclusion and Future Work}\label{sec:conclusion}

In this work, we identified a critical yet overlooked issue in self-evolving generative recommendation, termed evolution conflict, where heterogeneous behavioral patterns are optimized within a fully shared autoregressive parameter space, causing dominant patterns to progressively overwhelm underrepresented ones.
To address this, we proposed \method{}, a self-evolving generative recommendation framework centered on a sparse Key-Value memory layer.
By routing different behavioral patterns through sparsely activated memory pathways and reinforcing underrepresented patterns via a consolidation loss, \method{} enables structured and effective continual evolution across heterogeneous user populations.
Extensive experiments on three real-world datasets validate the effectiveness of \method{} in improving performance for both common and uncommon behavioral patterns under various continual evolution settings.

This work opens up several promising directions for future exploration.
First, extending \method{} to other generative recommendation backbones beyond T5, such as LLM-based architectures, is a natural next step.
Second, the current memory routing relies on fixed Top-$K$ activation; adaptive routing that dynamically adjusts the activation budget based on behavioral complexity may further improve performance.
Third, exploring the interplay between memory-based isolation and explicit user modeling, such as combining sparse memory with lightweight user-specific adapters, could offer complementary benefits for personalized continual adaptation.
\appendix
\section{Appendix}\label{sec:appendix}

\subsection{Hyper-parameter Settings of Baselines}\label{app:baseline_hyperparameter_setting}
For baseline-specific hyperparameters, RecICL fixes the number of in-context examples to 4 and searches epochs in $\{30,40,50,60,80\}$; each example is selected by sliding from recent to earlier target positions in the current user's complete interaction sequence.
LSAT searches the fusion weight $\alpha$ in $\{0.4,0.6\}$, while the main experiment uses $\alpha=0.5$ by default.
SAIL-PIW searches the stability coefficient in $\{0.8,1.2\}$ and clips PIW weights into $[0.05,0.95]$.
PISA fixes the plasticity ratio to 0.2 and searches $(\alpha,\beta)$ in $\{(1.0,1.0),(1.2,0.8)\}$.
PESO is implemented following its original continual LoRA adaptation design~\cite{yoo2026peso}, and searches its KL regularization strength in $\{1.0,2.0\}$.
For LoRA-based baselines, we jointly train the T5 backbone and LoRA parameters to fully utilize their adaptation capability.
For LION, we tune the SML position in $\{0,1,2,3,4,5\}$, FS recent-history length in $\{2,4,6,8\}$, Con weight $\lambda$ in $\{0.1,0.3,0.5,1.0\}$, and SML configurations over $(n_k,\text{Top-}K)\in\{(64,32),(128,8),(128,16),(256,32)\}$, where $n_k$ denotes the number of SML keys per dimension.
{Based on the cross-dataset hyperparameter analysis, we fix SML layer 5, FS recent-history length 2, $n_k=64$, and Top-$K=32$ as the shared default setting across datasets, while the consolidation weight $\lambda$ is selected per dataset via validation within $\{0.1,0.3,0.5,1.0\}$ (Section~\ref{sec:hyperparam}).}

\begin{table}[h]
\centering
\setlength{\abovecaptionskip}{0cm}
\setlength{\belowcaptionskip}{0cm}
\caption{Statistics of the datasets.}
\label{tab:dataset_stats}
\setlength{\tabcolsep}{2mm}{
\resizebox{0.48\textwidth}{!}{
\begin{tabular}{l|c|c|c|c}
\toprule
\textbf{Dataset} & \textbf{\#Users} & \textbf{\#Items} & \textbf{\#Inter.} & \textbf{\#Inter. per Period} \\
\midrule
Games & 34,089 & 11,037 & 252,015 & 50,403 $\times$ 5 \\
CDs & 21,347 & 14,239 & 185,855 & 37,171 $\times$ 5 \\
Toys & 22,158 & 11,250 & 140,943 & 28,188 $\times$ 4 + 28,191 \\
\bottomrule
\end{tabular}
}}
\end{table}

\subsection{Hyperparameter Analysis}\label{app:hyperparam-sweeps}

\paragraph{Recent-history length.}
Table~\ref{tab:appendix-recent} reports Recall@10 under varying recent-history length
$h_{\mathrm{rec}}$. The default $h_{\mathrm{rec}}{=}2$ performs best or tied-best
across all datasets, and longer recency windows mildly degrade performance on Toys
and CDs. This suggests that very recent interactions are the most informative signal
for routing, and extending the window introduces noise rather than additional context.

\begin{table}[h]
\caption{Recall@10 under varying recent-history length $h_{\mathrm{rec}}$ ($4$-period
sample-weighted average). Default: $h_{\mathrm{rec}}{=}2$.}
\label{tab:appendix-recent}
\centering
\small
\begin{tabular}{l|cccc}
\toprule
Dataset & $h_{\mathrm{rec}}{=}2$ & $h_{\mathrm{rec}}{=}4$ & $h_{\mathrm{rec}}{=}6$ & $h_{\mathrm{rec}}{=}8$ \\
\midrule
Games & \textbf{0.0653} & 0.0651 & 0.0634 & 0.0632 \\
Toys  & \textbf{0.0769} & 0.0745 & 0.0727 & 0.0730 \\
CDs   & \textbf{0.0449} & 0.0427 & 0.0413 & 0.0402 \\
\bottomrule
\end{tabular}
\end{table}

\paragraph{Position of Memory Layer.}
Table~\ref{tab:appendix-layer} reports Recall@10 under varying decoder layer position
$\ell$. Placing the memory layer at $\ell{=}0$ collapses performance across all
datasets, as the input token embeddings lack the contextual structure needed for
meaningful key routing. From $\ell{=}1$ onward all positions are viable, and the
marginal effect of going deeper is small. Layer~$5$ is tied-or-best on every
dataset and is adopted as the default.

\begin{table}[h]
\caption{Recall@10 under varying memory layer position $\ell$ ($4$-period
sample-weighted average). }
\label{tab:appendix-layer}
\centering
\small
\begin{tabular}{l|cccccc}
\toprule
Dataset & $\ell{=}0$ & $\ell{=}1$ & $\ell{=}2$ & $\ell{=}3$ & $\ell{=}4$ & $\ell{=}5$ \\
\midrule
Games & {0.0261} & 0.0611 & 0.0641 & 0.0653 & 0.0679 & \textbf{0.0688} \\
Toys  & {0.0142} & 0.0761 & 0.0773 & 0.0769 & 0.0776 & \textbf{0.0784} \\
CDs   & {0.0104} & 0.0439 & 0.0437 & 0.0449 & 0.0436 & \textbf{0.0467} \\
\bottomrule
\end{tabular}
\end{table}

\bibliographystyle{plainnat}
\bibliography{bibfile}

\end{document}